\newcommand{\set}[1]{\left\{#1\right\}}
\title{Pattern Backtracking Algorithm\\
for the Workflow Satisfiability Problem\\
with User-Independent Constraints}
\titlerunning{Pattern Backtracking Algorithm for the Workflow Satisfiability Problem}
\author{
D. Karapetyan\inst{1}
\and A. Gagarin\inst{2}
\and G. Gutin\inst{2}
}
\institute{
University of Nottingham, UK\\ 
\email{Daniel.Karapetyan@gmail.com}\\
           \and
Royal Holloway, University of London, UK \email{\{Andrei.Gagarin,G.Gutin\}@rhul.ac.uk}\\
           }
\g@addto@macro{\UrlBreaks}{\UrlOrds}
\begin{document}

\maketitle
\begin{abstract}
The workflow satisfiability problem (WSP) asks whether there exists an assignment of authorised users to the steps in a workflow specification, subject to certain constraints on the assignment.
(Such an assignment is called valid.)
The problem is NP-hard even when restricted to the large class of user-independent constraints.
Since the number of steps $k$ is relatively small in practice, it is natural to consider a parametrisation of the WSP by $k$.
We propose a new fixed-parameter algorithm to solve the WSP with user-independent constraints.
The assignments in our method are partitioned into equivalence classes such that the number of classes is exponential in $k$ only.
We show that one can decide, in polynomial time, whether there is a valid assignment in an equivalence class.
By exploiting this property, our algorithm reduces the search space to the space of equivalence classes, which it browses within a backtracking framework, hence emerging as an efficient yet relatively simple-to-implement or generalise solution method.
%The algorithm is based on the idea of partitioning all assignments into equivalence classes and on the backtracking framework resulting in an efficient yet relatively simple-to-implement or generalise algorithm.
We empirically evaluate our algorithm against the state-of-the-art methods and show that it clearly wins the competition on the whole range of our test problems and significantly extends the domain of practically solvable instances of the WSP.
\end{abstract}

\section{Introduction}\label{sec:intro}
% problem in general; general motivation
In the \emph{workflow satisfiability problem} (WSP), we aim at assigning authorised users to the steps in a workflow specification, subject to some constraints arising from business rules and practices.
The WSP has applications in information access control (e.g.\ see~\cite{ANSI04,BaBuKa14,BeFeAt99}), and it is extensively studied in the security research community~\cite{BaBuKa14,BeFeAt99,Cr05,WaLi10}.  
In the WSP, we are given a set $U$ of \emph{users}, a set $S$ of \emph{steps}, a set $\mathcal{A} = \set{A(u) : u \in U}$ of \emph{authorisation lists}, where $A(u) \subseteq S$ denotes the set of steps for which user $u$ is authorised, and a set $C$ of \emph{(workflow) constraints}.
In general, a \emph{constraint} $c \in C$ can be described as a pair $c = (T, \Theta)$, where $T \subseteq S$ is the \emph{scope} of the constraint and $\Theta$ is a set of functions from $T$ to $U$ which specifies those assignments of steps in $T$ to users in $U$ that satisfy the constraint (authorisations disregarded).  Authorisations and constraints described in WSP literature are relatively simple such that we may assume that all authorisations and constraints can be checked in polynomial time (in $|U|$, $|S|$ and $|C|$).

Given a \emph{workflow} $W = (S, U, \mathcal{A}, C)$, $W$ is \emph{satisfiable} if there exists a function $\pi: S \rightarrow U$ such that
\begin{itemize}
\item for all $s \in S$, $s \in A(\pi(s))$ (each step is allocated to an authorised user);
\item for all $(T,\Theta) \in C$, $\pi|_{T}  \in \Theta$ (every constraint is satisfied).
\end{itemize}
A function $\pi: S \rightarrow U$ is an \emph{authorised} (\emph{eligible}, \emph{valid}, respectively) \emph{complete plan} if it satisfies the first condition above (the second condition, both conditions, respectively).

For example, consider the following instance of WSP\@.
The step and user sets are $S = \{ s_1, s_2, s_3, s_4 \}$ and $U = \{u_1,u_2, \dots, u_5\}$. 
The authorisation lists are $A(u_1) = \{ s_1, s_2, s_3, s_4 \}$, $A(u_2) = \{ s_1 \}$, $A(u_3) = \{ s_2 \}$, $A(u_4) = A(u_5) = \{ s_3, s_4 \}$. 
The constraints are $(s_1,s_2,=)$ (the same user must be assigned to $s_1$ and $s_2$), $(s_2, s_3,\neq)$ ($s_2$ and $s_3$ must be assigned to different users),  $(s_3, s_4,\neq)$, and $(s_4,s_1,\neq)$. 
Since the function $\pi$ assigning $u_1$ to $s_1$ and $s_2$, $u_4$ to $s_3$, and $u_5$ to $s_4$ is a valid complete plan, the workflow is satisfiable.  

Clearly, not every workflow is satisfiable, and hence
it is important to be able to determine whether a workflow is satisfiable or not and, if it is satisfiable, to find a valid complete plan.
Unfortunately, the WSP is  NP-hard~\cite{WaLi10} and, since the number $k$ of steps is usually relatively small in practice (usually $k\ll n=|U|$ and we assume, in what follows, that $k<n$), Wang and Li \cite{WaLi10} introduced its parameterisation\footnote{We use terminology of the recent monograph \cite{DoFe13} on parameterised algorithms and complexity.}  by $k$. Algorithms for this parameterised problem were also studied in \cite{FAW2014,CoCrGaGuJo13,JOCO2014,CrGuYe13}. While in general the WSP is W[1]-hard \cite{WaLi10}, 
the WSP restricted\footnote{While we consider special families of constraints, we do not restrict authorisation lists.} to some practically important families of constraints is fixed-parameter tractable (FPT) ~\cite{CoCrGaGuJo13,CrGuYe13,WaLi10}. (Recall that a problem parameterised by $k$ is FPT if it can be solved by an FPT algorithm, i.e.\ an algorithm of running time $O^*(f(k))$, where $f$ is an arbitrary function depending on $k$ only, and $O^*$ suppresses not only constants, but also polynomial factors in $k$ and other parameters of the problem formulation.)

Many business rules are not concerned with the identities of the users that perform a set of steps.  Accordingly, we say a constraint $c = (T,\Theta)$ is \emph{user-independent} if, whenever $\theta \in \Theta$ and $\phi: U \rightarrow U$ is a permutation, then $\phi \circ \theta \in \Theta$. 
In other words, given a complete plan $\pi$ that satisfies $c$ and any permutation $\phi: U \rightarrow U$, the plan $\pi' : S \rightarrow U$, where $\pi'(s) = \phi(\pi(s))$, also satisfies $c$.
The class of user-independent constraints is general enough in many practical cases; for example, all the constraints defined in the ANSI RBAC standard \cite{ANSI04} are user-independent. 
Most of the constraints studied in \cite{FAW2014,JOCO2014,CrGuYe13,WaLi10} and other papers are also user-independent. 
Classical examples of user-independent constraints are the requirements that two steps are performed by either two different users ({\em separation-of-duty}), or the same user ({\em binding-of-duty}). More complex constraints state that at least/at most/exactly $r$ users are required to complete some sensitive set of steps (these constraints belong to the family of \emph{counting} constraints), where $r$ is usually small. 
A simple reduction from {\sc Graph Colouring} shows that the WSP restricted to the separation-of-duty constraints is already NP-hard \cite{WaLi10}.

The WSP is an important applied problem and is thoroughly studied
in the literature.  However, as was shown by Cohen et al.~\cite{FAW2014}, 
the methods developed so far  were 
capable of solving user-independent WSP instances only for relatively small values of $k$.  
%Effectively, there was no method to tackle such WSP instances even with moderate $k$. 
In this paper we propose a new approach that, compared to the existing solution methods, significantly extends the number of steps in practically solvable instances now covering the values of $k$ expected in the majority of real-world instances. 
Importantly, the proposed method is relatively simple to implement or extend with new constraints, such that its accessibility is similar to that of SAT-solvers used by practitioners~\cite{WaLi10}.

%In this paper, we propose a new algorithm to solve the WSP with user-independent constraints. 

The proposed solution method is a deterministic algorithm that uses backtracking to browse the space of all the equivalence classes of partial solutions. 
We show that it is possible to test efficiently if there exists an authorised plan in a given equivalence class.
This makes our algorithm FPT as the number of equivalence classes is exponential in $k$ only.
%We show that the time complexity of this approach is superior to the previous results.
% experimental results for the new FPT algorithm
%We experimentally compare the new algorithm to the two state-of-the-art methods~\cite{FAW2014,JOCO2014} and 
%show that the new algorithm clearly outperforms both solvers on a wide range of test instances and significantly 
%extends the domain of practically solvable instances of the WSP.

%%%%%%%%%%%%%%%%%%%%%%%%%
% paper organization
%In Section~\ref{sec:related_notations}, we briefly describe related work and 
%introduce some notations and basic concepts. Section~\ref{sec:backtracking} describes details of the new backtracking FPT algorithm searching the space of patterns in a depth-first order, a procedure used to obtain a plan realising a given pattern, and the corresponding algorithm to search for matchings in a bipartite graph. In Section~\ref{sec:comparison}, we compare the two FPT approaches, i.e. our new FPT algorithm with that provided in \cite{FAW2014,CoCrGaGuJo13,JOCO2014}. Section~\ref{sec:experiments} describes computational experiments in support of the new FPT approach to solve WSP with counting constraints. Finally, Section~\ref{sec:conclusion} provides some general remarks and conclusions.

%%%%%%%%%%%%%%%%%%%%%%%%%%%
\section{Patterns and the User-Iterative Algorithm}
\label{sec:patterns-and-ui}
% related work; connection to the previous work
% generic FPT algorithm based on encoding equivalence classes into patterns; theory vs practice 

%Pattern is a key concept of the algorithm proposed in this paper and the algorithm in~\cite{CoCrGaGuJo13}.

A \emph{plan} is a function $\pi: T \rightarrow U$, where $T \subseteq S$ (note that if $T = S$ then $\pi$ is a complete plan).  %A partial plan is authorised if ...
%\begin{itemize}
%\item for all $s \in S$, $s \in A(\pi(s))$ (each step is allocated to an authorised user);
%\item for all $(T,\Theta) \in C$, $\pi|_{T} = \theta$ for some $\theta \in \Theta$ (every constraint is satisfied).
%\end{itemize}
We define an \emph{equivalence relation} on the set of all plans, which is a special case of an equivalence relation defined in \cite{CoCrGaGuJo13}.
For user-independent constraints, two plans $\pi : T \rightarrow U$ and $\pi' : T' \rightarrow U$ are \emph{equivalent}, denoted by $\pi \approx \pi'$, if and only if $T = T'$, and $\pi(s) = \pi(t)$ if and only if $\pi'(s) = \pi'(t)$ for every $s, t \in T$.
Assuming an ordering $s_1, s_2, \dots, s_k$ of steps $S$,
every plan $\pi : T \rightarrow U$ can be encoded into a \emph{pattern} $P = P(\pi) = (x_1,\dots,x_k)$ defined by:
\begin{equation}
 x_i =
  \begin{cases}
    0 & \text{if } s_i \notin T, \\
    1 & \text{if } i = 1 \text{ and } s_1 \in T,\\
    x_j & \text{if } \pi(s_i) = \pi(s_j) \text{ and } j < i, \\
    \max\set{x_1, x_2, \dots, x_{i-1}} + 1 & \text{otherwise.}
  \end{cases}
\end{equation}
The pattern $P(\pi)$ uniquely encodes the equivalence class of $\pi$, and $P(\pi) = P(\pi')$ for every $\pi'$ in that equivalence class~\cite{CoCrGaGuJo13}. %\todo{(Is there a better way of saying it?)}.
The pattern $P$ represents an assignment of steps in $T$ to some users in any plan of the equivalence class of $\pi$.  
We say that a pattern is \emph{complete} if $x_i \neq 0$ for $i = 1, 2, \ldots, k$.

The state-of-the-art FPT algorithm for the WSP with counting constraints proposed in~\cite{FAW2014} and called here User-Iterative (UI), iterates over the set of users and gradually computes all encoded equivalence classes of valid plans until it finds a complete solution to the problem, or all the users have been considered. 
Effectively, it uses the breadth-first search in the space of plans.  
In the breadth-first search tree, equivalent plans can be generated but they are detected efficiently using patterns and the corresponding search branches are then merged together.  
Since the UI algorithm generates a polynomial number of plans per equivalence class and the number of equivalence classes is exponential in $k$ only, the UI algorithm is FPT.
The results of~\cite{FAW2014,JOCO2014} show that the generic user-iterative FPT algorithm of~\cite{CoCrGaGuJo13} has a practical value, and its implementations are able to outperform the well-known pseudo-Boolean SAT solver SAT4J~\cite{BePa10}.

In this paper we propose a new FPT solution method for the WSP which also exploits equivalence classes and patterns but in a more efficient manner.  
Among other advantages, our algorithm never generates multiple plans within the same equivalence class.  
For further comparison of our algorithm with the UI algorithm, see Section~\ref{sec:comparison}.

\section{The Pattern-Backtracking Algorithm}
\label{sec:pb}

We call our new method \emph{Pattern-Backtracking} (PB) as it uses the backtracking approach to browse the search space of patterns.
%In this section we describe the new FPT algorithm, which we call \emph{Pattern-Backtracking} (PB), that uses a backtracking approach to browse the search space of patterns.
To describe it, we introduce several additional notations.
We will say that a plan $\pi : T \rightarrow U$ is \emph{authorised} if $s \in A(\pi(s))$ for every $s \in T$, \emph{eligible} if it does not violate any constraint in $C$, and \emph{valid} if it is both authorised and eligible.
Similarly, a pattern $P$ is \emph{authorised}, \emph{eligible} or \emph{valid} if there exists a plan $\pi$ such that $P(\pi) = P$ and $\pi$ is authorised, eligible or valid, respectively.
By $P(s_i)$ we denote the value $x_i$ in $P = (x_1, x_2, \ldots, x_k)$.  
We also use notations $A^{-1}(s) = \{ u \in U :\; s \in A(u) \}$ for the set of users authorised for step $s \in S$ and $P^{-1}(x_i) = \{ s \in S :\; P(s) = x_i \}$ for all the steps assigned to the same user encoded by the value of $x_i$ in $P$\@.
%, and $P^{-1}(0) = S\setminus T$ in partial patterns.
Note that $P^{-1}(x_i) \neq \emptyset$ for $i = 1, 2, \ldots, k$ for any complete pattern.

%Unlike the UI algorithm, PB searches for eligible patterns first, and only then tries to convert them into valid plans.  
%Hence, less effort is put into exploring search-space branches with no eligible patterns.
%The other advantage of the backtracking algorithm is that it explores the search tree in the depth-first order.  This gives the algorithm flexibility to change the order of steps in which they are assigned and reduces memory consumption.

In Section~\ref{sec:pattern-validity} we show how to find a valid plan for an eligible pattern, which is an essential part of our algorithm, and in Section~\ref{sec:pb-backtracking} we describe the algorithm itself.

%%%%%%%%%%%%%%%%%%%%
\subsection{Pattern Validity Test}
\label{sec:pattern-validity}

The PB algorithm searches the space of patterns; once an eligible complete pattern $P$ is found, we need to check if it is valid and, if it is, then to find a plan $\pi$ such that $P = P(\pi)$.
The following theorem allows us to address these two questions efficiently.

For a complete pattern $P = (x_1, x_2, \ldots, x_k)$, let $X = \{ x_i :\; i = 1, 2, \ldots, k \}$ (note that the cardinality of the set $X$ may be smaller than $k$).
Let $G = (X \cup U, E)$ be a bipartite graph, where $(x_i, u) \in E$ if and only if $u \in A^{-1}(s)$ for each $s \in P^{-1}(x_i)$, $x_i\in X$. 

%In this section, we deal with two problems:\\ 
%(a) Given an eligible pattern $P$, determine whether it is valid (i.e. authorised);\\ 
%(b) Given a valid pattern $P$, provide a valid plan $\pi$ realising the pattern $P$.\\

%%%%%%%%%%%%%%%%%%%%%%
\begin{theorem}
\label{th:matching}
 A pattern $P$ is authorised if and only if $G$ has a matching of size $|X|$.
\end{theorem}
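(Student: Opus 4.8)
The plan is to establish the biconditional by translating the definition of an authorised pattern into the existence of a system of distinct representatives, which is exactly what a matching of size $|X|$ in the bipartite graph $G$ encodes. Recall that $P$ is authorised if there exists a plan $\pi$ with $P(\pi) = P$ that is authorised, i.e.\ $s \in A(\pi(s))$ for every step $s$. The key observation is that within any plan in the equivalence class of $P$, all steps sharing the same pattern value $x_i$ must be assigned to one and the same user, and distinct pattern values must correspond to distinct users. Thus choosing an authorised plan amounts to choosing, for each value $x_i \in X$, a single user who is authorised for \emph{every} step in $P^{-1}(x_i)$, and doing so injectively across the values in $X$. This is precisely the condition captured by the edge set $E$ of $G$.

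\medskip
\textbf{The forward direction.}
First I would assume $P$ is authorised and fix an authorised plan $\pi$ with $P(\pi) = P$. For each $x_i \in X$, define $u_i = \pi(s)$ for any $s \in P^{-1}(x_i)$; this is well-defined because by the construction of the pattern, all steps in $P^{-1}(x_i)$ are mapped by $\pi$ to the same user. Since $\pi$ is authorised, $u_i \in A^{-1}(s)$ for every $s \in P^{-1}(x_i)$, so $(x_i, u_i) \in E$. Moreover, distinct pattern values correspond to distinct users (again by the pattern definition, $\pi(s) = \pi(t)$ iff $P(s) = P(t)$), so the map $x_i \mapsto u_i$ is injective. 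The set of edges $\{(x_i, u_i) : x_i \in X\}$ therefore forms a matching saturating $X$, of size $|X|$.

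\medskip
\textbf{The converse direction.}
Conversely, suppose $G$ has a matching $M$ of size $|X|$; since $|X|$ is the size of one side of the bipartition, $M$ saturates $X$ and assigns to each $x_i$ a distinct user $u_i$ with $(x_i, u_i) \in E$. I would then define $\pi : S \to U$ by setting $\pi(s) = u_i$ whenever $P(s) = x_i$ (every step of a complete pattern has a nonzero value, so $\pi$ is total on $S$). Because $(x_i, u_i) \in E$ means $u_i$ is authorised for all $s \in P^{-1}(x_i)$, the plan $\pi$ is authorised. Finally I would verify $P(\pi) = P$: two steps receive the same user under $\pi$ exactly when they share a pattern value (using injectivity of $M$ for the ``only if''), which is the defining condition of the equivalence class encoded by $P$.

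\medskip
\textbf{Anticipated obstacle.}
The routine verifications are the well-definedness of $u_i$ and the careful check that $P(\pi) = P$ in the converse; the main subtlety I expect is ensuring that the injectivity of the matching $M$ is genuinely needed and used, so that distinct values of $X$ never collapse to one user and the reconstructed plan lands in the correct equivalence class rather than a coarser one. Handling the edge condition correctly --- namely that an edge requires authorisation for \emph{all} steps in $P^{-1}(x_i)$ simultaneously, not merely one --- is the crux that makes the equivalence tight.
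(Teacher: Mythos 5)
Your proposal is correct and follows essentially the same argument as the paper: both directions use the same constructions (plan $\to$ matching via the edges $(x_i,\pi(s))$, and matching $\to$ plan by assigning each user $u_i$ to all steps in $P^{-1}(x_i)$), with the same verifications of authorisation, injectivity, and $P(\pi)=P$. The only difference is cosmetic --- you present the two directions in the opposite order from the paper.
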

\begin{proof}
Suppose $M$ is a matching of size $|X|$ in $G$. 
Construct a plan $\pi$ as follows: for each edge $(x_i, u) \in M$ and $s \in P^{-1}(x_i)$, set $\pi(s) = u$.  
Since $M$ covers $x_i$ for every $i = 1, 2, \ldots, k$ and $P$ is a complete pattern, the above procedure defines $\pi(s)$ for every step $s \in S$.
Hence, $\pi$ is a complete plan.
Now observe that, for each $x_i \in X$, all the steps $P^{-1}(x_i)$ are assigned to exactly one user, and if $x_i \neq x_j$ for some $i, j \in \{ 1, 2, \ldots k \}$, then $\pi(s_i) \neq \pi(s_j)$ by definition of matching.
Therefore $P(\pi) = P$. 
Observe also that $\pi$ respects the authorisation lists; for each edge $(x_i, u) \in M \subseteq E$ and each step $s \in P^{-1}(x_i)$, we guarantee that $u \in A(s)$. 
Thus, plan $\pi$ is authorised and, hence, pattern $P = P(\pi)$ is also authorised.

\smallskip
%%%%
On the other hand, assume there exists an authorised plan $\pi$ such that $P(\pi) = P$ for a given pattern $P$\@.  
Let $X = \{ x_i :\; i = 1, 2, \ldots, k \}$.  
Construct a set $M$ as follows: $M = \{ (x_i, u) :\; x_i \in X \text{ and } \exists s \in P^{-1}(x_i) \text{ s.t.\ } u = \pi(s) \}$.  
Consider a pair $(x_i, u) \in M$, and find some $s \in P^{-1}(x_i)$.
Note that, as $P = P(\pi)$ and by definition of pattern, $\pi(s') = u$ for every $s' \in P^{-1}(x_i)$.  
Since $\pi$ is authorised, $u \in A(s')$ for every $s' \in P^{-1}(x_i)$, i.e.\ $(x_i, u) \in E$ and $M \subseteq E$.  In other words, $M$ is a subset of edges of $G$.

Now notice that, for each $x_i \in X$, there exists at most one edge $(x_i, u) \in M$ as $\pi(s') = u$ for every $s' \in P^{-1}(x_i)$.  Moreover, for each $u \in U$, there exists at most one edge $(x_i, u) \in M$ as otherwise there would exist some $i, j \in \{ 1, 2, \ldots, k \}$ such that $\pi(s_i) = \pi(s_j)$ and $x_i \neq x_j$, which violates $P = P(\pi)$.  Hence, the edge set $M$ is disjoint.  Finally, $|M| = |X|$ because $P^{-1}(x_i)$ is non-empty for every $x_i \in X$.  We conclude that $M$ is a matching in $G$ of size $|X|$.
\qed
\end{proof}

%%%%%%%%%%%%%%%%%%%%%%
Theorem~\ref{th:matching} implies that, to determine whether an eligible pattern $P$ is valid, it is enough to construct the bipartite graph $G = (X \cup U, E)$ and to find a maximum size matching in $G$.  
It also provides an algorithm for converting a maximum matching $M$ of size $|X|$ in $G$ into a valid plan $\pi$ such that $P(\pi) = P$.

The matching problem arising in Theorem~\ref{th:matching} has some interesting properties:
\begin{itemize}
	\item The bipartite graph $G = (X \cup U, E)$ is highly unbalanced as $|X| \le k$, and we assume that $|U| \gg k$.  
	It is easy to see that the maximum length of an augmenting path in $G$ is $|X| \le k$ and, hence, the time complexity of the Hungarian and Hopcroft-Karp methods are $O(k^3)$ and $O(k^{2.5})$, respectively.
	\item We are interested only in matchings of size $|X|$.  If the maximum matching is of a smaller size, we do not need to retrieve it.
	\item Once a matching of size $|X|$ is found, the PB algorithm terminates since a valid plan is found.  
	However, the algorithm might test an exponential number (in $k$) of graphs with the maximum matching of size smaller than $|X|$.
	Hence, we are mainly interested in time of checking whether the maximum matching is smaller than $|X|$.
\end{itemize}

To exploit the above features, we use the Hungarian method with a speed-up heuristic provided by the following proposition.
%Moreover, we can implement several speed-up heuristics to detect non-existence of a matching of size $|X|$:\\
%%\begin{enumerate}
%--- If $\nexists (x, u) \in E$ for some $x \in X$ ($x\in X$ is an isolate in $G$?), then there is no matching of size $|X|$ in $G$.\\
%--- If $|\{ (x, u) \in E :\; u \in U \}| \ge |X|$, one can remove vertex $x$ from $G$ without affecting a correct outcome of the algorithm.
%%\end{enumerate}
%
%The following lemma provides an observation used as a heuristic speed-up in our algorithm to detect instances of the graph $G$ having the maximum matching size less than $|X|$ earlier.
\begin{proposition}
\label{lemma:matching}
If $M = \{ (x_{\chi(1)}, u_{\psi(1)}), \ldots, (x_{\chi(t)}, u_{\psi(t)}) \}$, $t < |X|$, is a matching in the graph $G = (X \cup U, E)$ such that there exists no $M$-augmenting path in $G$ starting at a vertex $x_{\chi(t+1)} \in X$, then there is no matching covering all vertices of $X$ in $G$.
\end{proposition}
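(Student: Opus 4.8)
The plan is to argue by contradiction using the symmetric difference of two matchings, the standard device behind Berge's theorem, specialised to the case where one side of the bipartition must be saturated. Suppose, for contradiction, that some matching $M^*$ covers all vertices of $X$, so that $|M^*| = |X|$. I would consider the symmetric difference $M \triangle M^*$, all of whose edges lie in $E$ since both $M$ and $M^*$ are matchings of $G$. As usual, every vertex has degree at most $2$ in $M \triangle M^*$, so this edge set decomposes into vertex-disjoint paths and even cycles, and a vertex has degree exactly $1$ precisely when it is covered by exactly one of $M$, $M^*$.

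The key step is to track the component containing the uncovered vertex $x_{\chi(t+1)}$. Since it is uncovered by $M$ but covered by $M^*$ (because $M^*$ saturates $X$), it has degree $1$ in $M \triangle M^*$ and is therefore an endpoint of a path $Q = (v_0, v_1, \ldots, v_\ell)$ with $v_0 = x_{\chi(t+1)}$ and $\ell \ge 1$. Along $Q$ the edges alternate between the two matchings, beginning with the $M^*$-edge incident to $v_0$; since $G$ is bipartite with $v_0 \in X$, we have $v_i \in X$ exactly when $i$ is even, while the edge $v_{i-1} v_i$ belongs to $M^*$ when $i$ is odd and to $M$ when $i$ is even.

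Now I would carry out a parity analysis of the other endpoint $v_\ell$. If $\ell$ were even, then $v_\ell \in X$ and the last edge $v_{\ell-1} v_\ell$ would be an $M$-edge, which forces $v_\ell$ to be uncovered by $M^*$ (otherwise $v_\ell$ would have degree $2$ in $M \triangle M^*$); but an $X$-vertex uncovered by $M^*$ contradicts the assumption that $M^*$ saturates $X$. Hence $\ell$ is odd, so $v_\ell \in U$ and the final edge is an $M^*$-edge, which by the same reasoning leaves $v_\ell$ uncovered by $M$. Consequently $Q$ is a path joining two $M$-unmatched vertices whose first and last edges are both non-$M$ edges, i.e.\ an $M$-augmenting path starting at $x_{\chi(t+1)}$. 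This contradicts the hypothesis that no such augmenting path exists, and the proposition follows.

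The main obstacle, and really the only delicate point, is keeping the parity bookkeeping airtight: one must verify that $Q$ genuinely begins with a non-$M$ edge and that the side of the bipartition on which $v_\ell$ lands is exactly what lets the saturation property of $M^*$ eliminate the even case. Once the alternation pattern and the bipartite side-parity are pinned down correctly, the contradiction is immediate and the rest is routine.
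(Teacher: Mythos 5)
Your proof is correct and follows essentially the same route as the paper's: both take the symmetric difference of $M$ with a hypothetical $X$-saturating matching, observe it decomposes into alternating paths and even cycles, and show the path ending at the exposed vertex $x_{\chi(t+1)}$ must be $M$-augmenting, contradicting the hypothesis. Your parity bookkeeping for the other endpoint is in fact spelled out more carefully than in the paper's own argument.
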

\begin{proof}
W.l.o.g, assume that $M=\{(x_1,u_1),\dots ,(x_t,u_t)\}$ and $x_{\chi(t+1)}=x_{t+1}$. Now suppose that $G$ has a matching $$M' = \{ (x_1, y_1), (x_2, y_2),\ldots, (x_k, y_k) \},\ y_i \in U,\ i = 1, 2, \ldots, k$$ covering all of $X$.
Consider the symmetric difference of two matchings $H = (M \cup M') \setminus (M \cap M')$. 
Since every vertex of $H$ has degree at most $2$, the graph induced by $H$ consists of some disjoint paths and even cycles having edges alternating between $M$ and $M'$. 
Since $x_{t+1}$ is not in $M$ but covered by $M'$, it is an end point of one of the alternating paths in $H$, say $P_{t+1}$. 
Now, it is possible to see that $P_{t+1}$ is an augmenting path in $G$ with respect to the matching $M$; since, starting at $x_{t+1}$, every time we use an edge of $M'$ to go from a vertex in $X$ to a vertex in $U$, $P_{t+1}$ must have an end point in a vertex of $U$ not covered by $M$ ($M'$ covers all the vertices in $X$). 
This contradicts the fact that there is no augmenting path in $G$ starting at $x_{t+1}$ with respect to $M$.\squareforqed
\end{proof}

The result of Proposition~\ref{lemma:matching} allows us to terminate the Hungarian algorithm as soon as a vertex $x_i \in X$ is found such that no augmenting path starting from $x_i$ can be obtained.
Construction of the graph takes $O(kn)$ time, and solving the maximum matching problem in $G$ with the Hungarian method takes $O(k^3)$ time.

%%%%%%%%%%%%%%%%%%%%%%%%%%%
\subsection{The Backtracking Algorithm}
\label{sec:pb-backtracking}

The PB algorithm uses a backtracking technique to search the space of patterns, and for each eligible pattern, it verifies whether such a pattern is valid.  
If a valid pattern $P$ is found, the algorithm returns a complete plan $\pi$ such that $P(\pi) = P$, see Section~\ref{sec:pattern-validity} for details.  
If no valid pattern is found during the search, the instance is unsatisfiable.

The calling procedure for the PB algorithm is shown in Algorithm~\ref{alg:start}, which in turn calls the recursive search function in Algorithm~\ref{alg:recursion}. 
%The branching heuristic is given in Algorithm~\ref{alg:step-selection}. 
%Notice that the initialisation of patterns (the staring point) in Algorithm~\ref{alg:start} is similar to the initialization of the BFS-like algorithms of \cite{FAW2014,CoCrGaGuJo13,JOCO2014}. 
%Algorithm~\ref{alg:recursion} is a recursive call of the backtracking mechanism.
The recursive function tries all possible extensions $P'$ of the current pattern $P$ by adding one new step $s$ to it (line~\ref{line:extend-pattern}).
The step $s$ is selected heuristically (line~\ref{line:select-s}), where function $\rho(s)$ is an empirically tuned function indicating the importance of step $s$ in narrowing down the search space.  
The implementation of $\rho(s)$ depends on the specific types of constraints involved in the instance and should reflect the intuition regarding the structure of the problem.
See (\ref{eq:rho}) in Section~\ref{sec:experiments} for a particular implementation of $\rho(s)$ for the types of constraints we used in our computational study.
Note that our branching heuristic dynamically changes the steps ordering used in the pattern definition in Section~\ref{sec:patterns-and-ui}.
Nevertheless, this does not affect any theoretical properties of the pattern.

We use a heuristic (necessary but not sufficient) test (lines~\ref{line:authorisation-heuristic-begin}--\ref{line:authorisation-heuristic-end} of Algorithm~\ref{alg:recursion}) to check whether the pattern $P'$ can be authorised; that allows us to prune branches which are easily provable to include no authorised patterns.

In line~\ref{line:pattern-eligibility-test}, the algorithm checks whether the new pattern $P'$ violates any constraints and, if not, then executes the recursive call.

\SetKwInOut{Input}{input}
\SetKwInOut{Output}{output}

%%%%%%%%%%%%%%%%%%
\begin{algorithm}[tb]
\caption{Backtracking search initialisation (entry procedure of PB)\label{alg:start}}
\Input {WSP instance $W = (S, U, \mathcal{A}, C)$}
\Output {Valid plan $\pi$ or UNSAT}

Initialise $P(s) \gets 0$ for each $s \in S$\;
$\pi \gets \text{Recursion}(P)$\;

\Return{$\pi$} ($\pi$ may be UNSAT here)\;
\end{algorithm}

%%%%%%%%%%%%%%
{\small
\begin{algorithm}[tb]
\caption{Recursion$(P)$ (recursive function for backtracking search)\label{alg:recursion}}
\Input {Pattern $P$}
\Output {Eligible plan or UNSAT if no eligible plan exists in this branch of the search tree}
Initialise the set $S' \subseteq S$ of assigned steps $S' \gets \{ s \in S :\; P(s) \neq 0 \}$\;
\If {$S' = S$}
{
	Verify if pattern $P$ is valid (using the matching algorithm of Theorem~\ref{th:matching})\;
	\If {pattern $P$ is valid}
	{
		\Return {plan $\pi$ realising $P$}\;
	}
	\Else
	{
		\Return {UNSAT}\;
	}
}
\Else
{
	Select unassigned step $s \in S \setminus S'$ that maximises $\rho(s)$\; \label{line:select-s}
	Calculate $k' \gets 1 + \max_{s \in S} P(s)$\;
	\For {$x = 1, 2, \ldots, k'$}
	{
		Set $P(s) \gets x$ to obtain a new pattern $P'$\; \label{line:extend-pattern}
		Compute the set of steps $Q$ assigned to $x$: $Q = \{ t \in S :\; P'(t) = x \}$\; \label{line:authorisation-heuristic-begin}
		\If {$|\bigcap_{t \in Q} A^{-1}(t)| = 0$}
		{
			Proceed to the next value of $x$ (reject $P'$)\; \label{line:authorisation-heuristic-end}
		}
		
		\If {$P'$ is an eligible pattern}
		{\label{line:pattern-eligibility-test}
			$\pi \gets \text{Recursion}(P')$\;
			\If {$\pi \neq UNSAT$}
			{
				\Return{$\pi$}\;
			}
		}
	}
}

\Return{UNSAT (for a particular branch of recursion; does not mean that the whole instance is unsat)}\;
\end{algorithm}
}

%%%%%%%%%%%%%%
%\begin{algorithm}[tb]
%\caption{Heuristic choice of a step for branching in the backtracking procedure\label{alg:step-selection}}
%
%\Input {Partial pattern $P$ with at least one unassigned step; if step $s$ is unassigned, $P(s) = 0$;}
%\Output {Step $s \in S$ such that $P(s) = 0$;}
%Initialise the set $S' \subset S$ of unassigned steps $S' = \{ s \in S :\; P(s) = 0 \}$\;
%Let $\rho : S' \rightarrow \mathbb{N}^0$ be a ranking function\;
%Initialise $\rho(s) = 0$ for every $s \in S'$\;
%\ForEach {$s \in S'$}
%{
%	%Initialise $\rho(s) \gets 0$\;
%	\ForEach {not-equals constraint $\{ t_1, t_2 \}$}
%	{
%		\If {$|\{ t_1, t_2 \} \cap S'| = 1$}
%		{
%			$\rho(s) \gets \rho(s) + 1$\;
%		}
%	}
%
%	\ForEach {at-most-$r$ constraint $T = \{t_1,t_2, \ldots,t_q\}$}
%	{
%		Compute the set $T'$ of already assigned steps $T' = \{ t \in T :\; P(t) > 0 \}$\;
%		Compute the set $X' \subseteq X$ of distinct assigned users $X' = \{ P(t) :\; t \in T' \}$\;
%
%		\If {$r = |X'|$}
%		{
%			$\rho(s) \gets \rho(s) + 100$\;
%		}
%		\ElseIf {$r - |X'| = 1$}
%		{
%			$\rho(s) \gets \rho(s) + 2$\;
%		}
%		\ElseIf {$r - |X'| = 2$}
%		{
%			$\rho(s) \gets \rho(s) + 1$\;
%		}
%	}
%}
%\Return {$s \in S'$ that maximises $\rho(s)$}\;
%\end{algorithm}

%%%%%%%%%%%%%
\section{Comparison of the PB and UI Algorithms} 
%{Comparison of the Two FPT Algorithmic Approaches}
\label{sec:comparison}

In this section we analyse the time and memory complexity of the PB algorithm and compare it to the UI algorithm.
 
%each of these constraints involves only a set of steps of cardinality bounded by a constant. 

Observe that each internal node (corresponding to an incomplete plan) in the search tree of the PB algorithm has at least two children, and each leaf in this tree corresponds to a complete pattern.
Thus, the total number of patterns considered by the PB algorithm is less than twice the number of complete patterns. 
Observe that the number of complete patterns equals the number of partitions of a set of size $k$, i.e.\ the $k$th Bell number $B_k$. 
Finally, observe that the PB algorithm spends time polynomial in $n$ on each node of the search tree.\footnote{Assuming that the WSP instance does not include any exotic constraints.}
Thus, the time complexity of the PB algorithm is $O^*(B_k)$.
The PB algorithm follows the depth-first search order and, hence, stores only one pattern at a time.  
At each leaf node, it also solves the matching problem generating a graph with $O(kn)$ edges.  
Hence, the memory complexity of the algorithm is $O(kn)$.

It is interesting to compare the PB algorithm to the UI algorithm (briefly described in Section~\ref{sec:patterns-and-ui}).
Despite both algorithms using the idea of equivalence classes and being FPT, they have very different working principles and properties.

%\begin{enumerate}
%	\item The time complexity of the UI algorithm is $O^*(3^kB_k\log B_k)$ \cite{CoCrGaGuJo13}, which is in %$O^*(3^kB_k)$ as $B_k=2^{k\log k(1-o(1))}$ \cite{BeTa10}. This indicates that the PB algorithm
%is asymptotically faster than the UI algorithm with respect to the
%parameter $k$.\footnote{It is easy to verify that the PB algorithm is
%also asymptotically faster than the UI algorithm with respect to
%$n$.}		\item 

\begin{enumerate}
	\item Observe that, in the worst case, the UI algorithm may store all patterns, and the number of patterns is $B_{k+1}$. Indeed, consider a pattern $P=(x_1,\dots ,x_k)$ and a set $\{s_1,\dots ,s_k,s_{k+1}\}.$ Then each partition of the set corresponds to a pattern of $P$, where $x_i=0$ if and only if $s_i$ and $s_{k+1}$ are in the same subset of the partition.
	Therefore,  the UI algorithm takes $O(k B_{k+1})$ memory, which is in sharp contrast to the PB algorithm that requires very little memory.
	Considering that, e.g.\ $B_{20} = 51\,724\,158\,235\,372$, memory consumption poses a serious bottleneck for the UI algorithm as the RAM capacity of any mainstream machine is well below the 
	value of $B_{20}$.
	Moreover, the UI algorithm accesses a large volume of data in a non-sequential order, which might have a dramatic effect on the algorithm's performance when implemented on a real machine as shown in~\cite{KaGuGo}.
%\item 

	\item From the practical point of view, the PB algorithm considers less
patterns than the UI algorithm ($O(B_k)$ vs.\ $O(B_{k+1})$) as
the PB algorithm assigns the steps in a strict order, avoiding generation of duplicate patterns.  Moreover, the
PB algorithm generates each pattern at most once, while the UI
algorithm is likely to generate a pattern several times rejecting the duplicates afterwards.  
%This indicates that the
%above estimation of the theoretical time complexity correlates with
%practical efficiency considerations.
%	\item

	\item Both algorithms use heuristics to determine the order in which the search tree is explored.
	However, while the UI algorithm has to use a certain fixed order of users for all the search branches, the PB algorithm has the flexibility of changing the order of steps in each branch of the search.
	Note that the order of assignments is crucial to the algorithm's performance as it can help to prune branches early.
\end{enumerate}

\section{Computational Experiments}
\label{sec:experiments}

In this section we empirically verify the efficiency of the PB algorithm.  We compare the following WSP solvers:
\begin{description}
	\item[PB] The algorithm proposed in this paper;
	\item[UI] Another FPT algorithm proposed in~\cite{CoCrGaGuJo13} and evaluated in \cite{FAW2014,JOCO2014};
	\item[SAT4J] A pseudo-Boolean SAT formulation~\cite{FAW2014,JOCO2014} of the problem solved with SAT4J.
\end{description}

Due to the difficulty of acquiring real-world WSP instances~\cite{FAW2014,WaLi10}, we use the random instance generator described in~\cite{FAW2014}.
Three families of user-independent constraints are used: \emph{not-equals} (also called \emph{separation-of-duty}) constraints \mbox{$(s, t, \neq)$}, \emph{at-most-$r$} constraints $(r, Q, \leqslant)$ and \emph{at-least-$r$} constraints $(r, Q, \geqslant)$.
A not-equals constraint $(s, t, \neq)$ is satisfied by a complete plan $\pi$ if and only if $\pi(s) \neq \pi(t)$.
An at-most-$r$ constraint $(r, Q, \leqslant)$ is satisfied if and only if $|\pi(Q)| \le r$, where $Q$ is the scope of the constraint.
Similarly, an at-least-$r$ constraint $(r, Q, \geqslant)$ is satisfied if and only if $|\pi(Q)| \ge r$.
We do not explicitly consider the widely used binding-of-duty constraints, that require two steps to be assigned to one user, as those can be trivially eliminated during preprocessing.
%former may be represented as a tuple $(r,Q,\leqslant)$, $Q \subseteq S$, $1\leqslant r \leqslant |Q|$, and is satisfied by any plan that allocates no more than $r$ users to the steps in $Q$.
%The separation-of-duty constraints $(s,t,\neq)$ are also called \emph{not-equals constraints}; a plan $\pi$ satisfies the constraint $(s,t,\neq)$ if $\pi(s) \neq \pi(t)$. 
%The counting constraints that we consider are \emph{at-most-$r$} and \emph{at-least-$r$} constraints.
%The former may be represented as a tuple $(r,Q,\leqslant)$, $Q \subseteq S$, $1\leqslant r \leqslant |Q|$, and is satisfied by any plan that allocates no more than $r$ users to the steps in $Q$.
%The latter may be represented as $(r,Q,\geqslant)$ and is satisfied by any plan that allocates at least $r$ users to the steps in $Q$, and we assume that $|Q|$ is bounded by a constant.
%From a practical point of view, these constraints naturally generalise binding-of-duty and separation-of-duty constraints: a binding-of-duty constraint $(s',t',=)$ can be thought of as an at-most-$1$ constraint $(1,\{s',t'\},%\leqslant)$, and a separation-of-duty constraint $(s,t,\neq)$ can be thought of as an at-least-$2$ constraint $(2,\{s,t\},\geqslant)$.
While the binding-of-duty and separation-of-duty constraints provide the basic modelling capabilities, the at-most-$r$ and at-least-$r$ constraints impose more general ``confidentiality'' and ``diversity'' requirements on the workflow, which can be important in some business environments.

The instance generator (available for downloading~\cite{SourceCodes}) takes four parameters: the number of steps $k$, the number of not-equals constraints $e$, the number of at-most and at-least constraints $c$ and the random generator seed value.
Each instance has $n = 10k$ users.  
For each user $u \in U$, it generates a uniformly random authorisation list $A(u)$ such that $|A(u)|$ is selected uniformly from $\{ 1, 2, \ldots, \lceil 0.5 k \rceil \}$ at random.  
It also generates $e$ distinct not-equals, $c$ at-most and $c$ at-least constraints uniformly at random.
All at-most and at-least constraints are of the form $(3, Q, \sigma)$, where $|Q| = 5$ and $\sigma \in \{ \leqslant, \geqslant \}$.

Our test machine is based on two Intel Xeon CPU E5-2630 v2 (2.6~GHz) and has 32~GB RAM installed.  
Hyper-threading is enabled, but we never run more than one experiment per physical CPU core concurrently.  
The PB algorithm is implemented in C\#, and the UI algorithm is implemented in C++.  
Concurrency is not exploited in any of the tested solution methods.
The PB algorithm is also available for downloading~\cite{SourceCodes}.

\bigskip

The branching heuristic implemented in line~\ref{line:select-s} of Algorithm~\ref{alg:recursion} selects a step $s \in S$ that maximises a ranking function $\rho(s)$:
\begin{equation}
\label{eq:rho}
\rho(s) = c_{\neq}(P) + \alpha c^0_\leq(P) + \beta c^1_\leq(P) + \gamma c^2_\leq(P),
\end{equation}
where $c_{\neq}(P)$ is the number of not-equals constraints involving step $s$, $c^i_\leq(P)$ is the number of at-most-$r$ constraints involving $s$ such that $r - i$ distinct users are already assigned to it, and $\alpha$, $\beta$ and $\gamma$ are parameters.  The intuition is that the steps $s$ that maximise $\rho(s)$ are tightening the search space quickly.  The parameters $\alpha$, $\beta$ and $\gamma$ were selected empirically.  We found out that the algorithm is not very sensitive to the values of these parameters, and settled down at $\alpha = 100$, $\beta = 2$ and $\gamma = 1$.  Note that the function does not account for at-least constraints.  This reflects our empirical observation that the at-least constraints are usually relatively weak in our instances and rarely help in pruning branches of search.

\bigskip

We started from establishing what parameter values make the instances hard.  
However, due to the lack of space, we provide only the conclusions drawn from this series of experiments.
As it could be expected, greatly under- and over-subscribed instances are easier to solve than the instances in the region between those two extremes.
The behaviour of the analysed solvers is consistent in this regard.
The particular values of the number of not-equals constraints $e$ and the number $c$ of at-most and at-least constraints that make the instances most challenging depend on $k$.
Thus, in our final experiment, which is to establish the maximum size $k$ of instances practically solvable by each of the methods, we considered several instances with a range of parameters to ensure that at least one of them is hard.
In particular, we fixed the \emph{density} of not-equals constraints, calculated as $d = \frac{2e}{k (k - 1)} \cdot 100\%$, at $d = 10\%$ and the number $c$ of at-most and at-least constraints at each of $c = 1.0k$, $c = 1.2k$ and $c = 1.4k$, producing three instances for each $k$ and seed value.

%
%At first, we fix $k = c = 20$ and test the running time of each algorithm on instances with various densities $d$ of not-equals constraints, where the density is calculated as $d = \frac{2e}{k (k - 1)} \cdot 100\%$.  
%The results are presented in Figure~\ref{fig:change-density}.  
%As expected, the probability for an instance to be unsatisfiable grows with the not-equals density (see the ``UNSAT'' plot).  
%Observe that, for the SAT and PB solvers, the hardest instances are the phase-transition instances in between under-subscribed instances on the left (all are satisfiable) and over-subscribed instances on the right (all are unsatisfiable).  
%The UI algorithm is relatively slow on lightly-constrained instances but clearly outperforms SAT on more constrained instances, which is consistent with the discussion in~\cite{FAW2014}.  
%The PB algorithm outperforms both SAT and UI solvers by three to six orders of magnitude on the whole range of instances.
%
%In our second experiment (Figure~\ref{fig:change-c}), we fixed $k = 20$, $d = 10\%$ and changed $c$ in a wide range.  
%In contrast to the above results, the running time of the solvers on the over-subscribed instances does not notably fall with the increase of $c$.
%This indicates that the at-most and at-least constraints generally provide less opportunities for early branch pruning.
%
%
%
\begin{figure}
\input{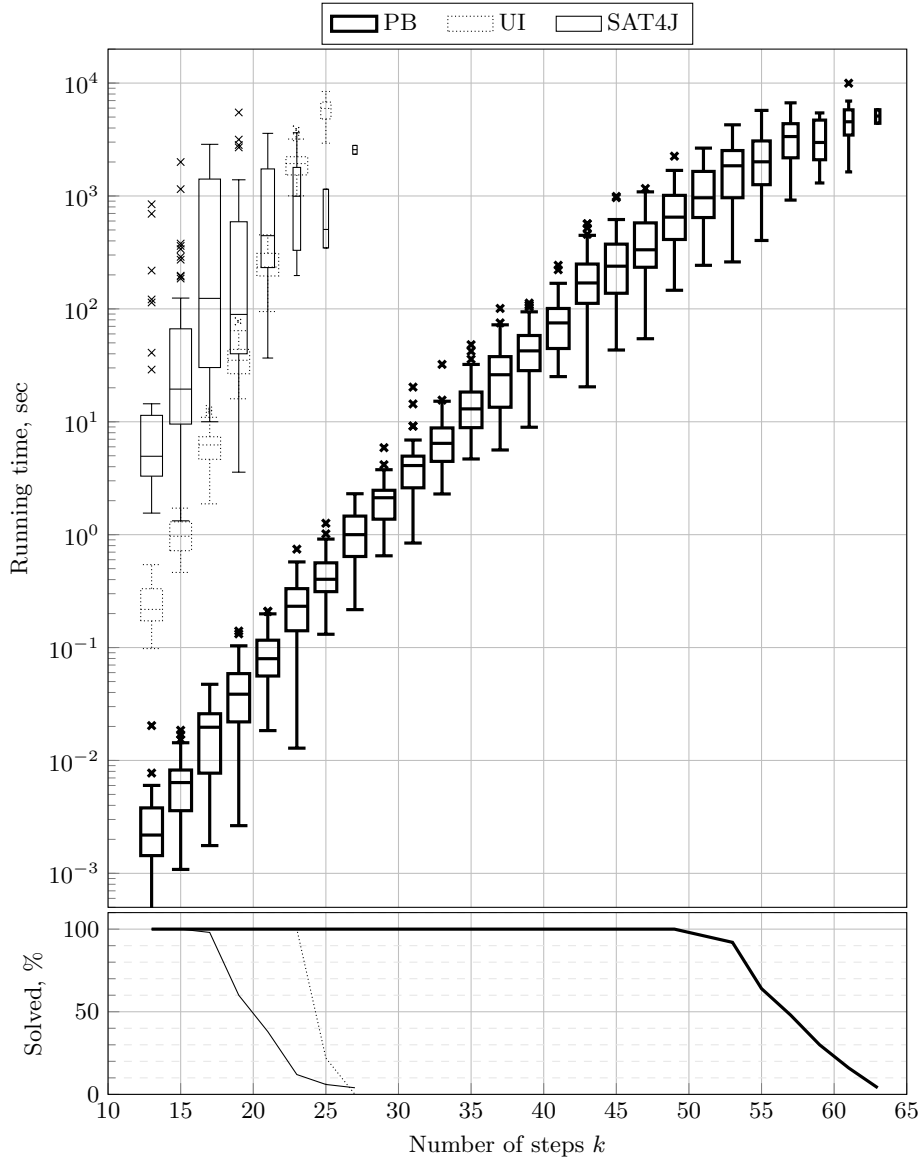}

\caption{Running time vs.\ number of steps $k$. For each $k$, we generate 50 instance sets with different seed values.  
The distributions are presented in the boxplot form, where the width of a box is proportional to the number of instance sets on which the solver succeeded.  
The plot at the bottom of the figure also shows the success rate of each solver.}
\label{fig:change-k}
\end{figure}
%
%
%
%In our final experiment, we solve sets of instances; for each $k$ and seed value, we generate four instances: $c = 0.8k$, $c = k$, $c = 1.2k$ and $c = 1.4k$.
%The not-equals density $d$ is fixed at $d = 10\%$ in this experiment.  
Each solver is given one hour limitation for each instance from the set. 
If a solver fails on at least one of the instances (could not terminate within 1~hour), we say that it fails on the whole set.
The intention is to make sure that the solver can tackle hard satisfiable and unsatisfiable instances within a reasonable time.
The results are presented in Figure~\ref{fig:change-k} in the form of boxplots.
The percentage of runs in which the solver succeeded is shown as the width of the box.
This information is also provided at the bottom of Figure~\ref{fig:change-k}.

The PB algorithm, being faster than the two other methods by several orders of magnitude, reliably solves all the instances of size up to $k = 49$.
Compare it to the UI and SAT4J solvers that succeed only for $k \le 23$ and $k \le 15$, respectively.
Moreover, its running time grows slower than that of the UI and SAT4J solvers, which indicates that it has higher potential if more computational power is allocated.
In other words, thanks to our new solution method, the previously unapproachable problem instances of practical sizes can now be routinely tackled.

%On small instance, the PB algorithm is faster than the two other algorithms by several orders of magnitude.
%The reader may also notice that the running time of each of the algorithms is approximately exponential in $k$, though the PB algorithm's running time grows slower than that of UI and SAT4J\@.
%(One may notice that the relatively straight lines of each of the solver curve down at the end; that is due to the experiment design, as the running time of each algorithm is limited.)
%As a result, it significantly extends the number of steps $k$ that can be tackled in practice; for example, given one hour time per instance, the previous state-of-the-art algorithm solves less than 25\% of the instance sets if $k = 25$ while the PB algorithm solves 100\% of the instances for every $k < 50$.
%In other words, previously unapproachable instances of practical sizes can now be easily tackled with the PB algorithm.
%Figure~\ref{fig:change-k} also demonstrates that the running time of the UI algorithm has relatively low dispersion while the SAT solver's dispersion is very high.
%As a result, the UI algorithm reliably solves all the instances of size up to $k = 23$ but then the number of solved instances sharply declines.

%%%%%%%%%%%%%
\section{Conclusion}
\label{sec:conclusion}

We proposed a new FPT algorithm for the WSP with user-independent constraints.  
%It is based on the idea of equivalence relation patterns, which was first exploited in the previous state-of-the-art algorithm.  
%Nevertheless, the new algorithm is superior to the old one due to the clear separation between eligibility and authorisation constraints and a more efficient search method.  
Our experimental analysis have shown that the new algorithm outperforms all the methods in the literature by several orders of magnitude and significantly extends the domain of practically solvable instances.
Another advantage of the new FPT algorithm is that it is relatively easy to implement and extend; for example, it is straightforward to parallelise it.

Future research is needed to establish further potential to improve the algorithm's performance.
Particular attention has to be paid to the branching heuristic.
Thorough empirical analysis has to be conducted to investigate the performance of the algorithms on easy and hard instances.

Another relevant subject was recently studied in~\cite{ValuedWSP-SACMAT}; the paper introduces an optimisation version of WSP and proposes an FPT branch and bound algorithm inspired by Pattern Backtracking.

\paragraph{Acknowledgment.}
This research was partially supported by EPSRC grant EP/\linebreak[1]K005162/1.
The source codes of the Pattern Backtracking algorithm and the instance generator are publicly available~\cite{SourceCodes}.

%\bibliography{refs}
%\bibliographystyle{splncs03}

\end{document}